\title{Improved Interference in Wireless Sensor Networks}
\author{Pragya Agrawal\inst{1} \and Gautam K. Das\inst{2}}
\institute{Department of Electronics and Communication Engineering \\ Visvesvaraya National Institute of Technology, Nagpur - 440011, India
 \and
Department of Mathematics, Indian Institute of Technology Guwahati \\ Guwahati - 781039, India}
\date{}
\begin{document}
\maketitle
\input{psfig.sty}

\begin{abstract}
Given a set ${\cal V}$ of $n$ sensor node distributed on a $2$-dimensional plane and a source
node $s \in {\cal V}$, the {\it interference problem} deals with assigning transmission range 
to each $v \in {\cal V}$ such that the members in ${\cal V}$ maintain connectivity predicate 
${\cal P}$, and the maximum/total interference is minimum. We propose algorithm for both 
{\it minimizing maximum interference} and {\it minimizing total interference} of the 
networks. For minimizing maximum interference we present optimum solution with running time 
$O(({\cal P}_n + n^2) \log n)$ for connectivity predicate ${\cal P}$ like strong connectivity, 
broadcast ($s$ is the source), $k$-edge(vertex) connectivity, spanner, where $O({\cal P}_n)$ 
is the time complexity for checking the connectivity predicate ${\cal P}$. The running time of 
the previous best known solution was $O({\cal P}_n \times n^2)$ \cite{BP08}.

For the minimizing total interference we propose optimum algorithm for the connectivity 
predicate broadcast. The running time of the propose algorithm is $O(n)$. For the same 
problem, the previous best known result was $2(1 + \ln (n-1))$-factor approximation 
algorithm \cite{BP08}. We also propose a heuristic for minimizing total interference in 
the case of strongly connected predicate and compare our result with the best result 
available in the literature. Experimental results demonstrate that our heuristic outperform 
existing result.

\textbf{Keywords:} Wireless sensor networks, Interference.
\end{abstract}

\section{Introduction}
A sensor node is a small size, low-power devices with limited computation and
communication capabilities. A wireless sensor network (WSN) is a set of sensor nodes 
and each sensor of the network can measure certain physical phenomena like temperature, 
pressure, intensity of light or vibrations around it. Wireless networks of such sensor 
nodes have many potential applications, such as surveillance, environment monitoring and 
biological detection \cite{ASSC02,MPSCA02}. The objective of such network is to process
some high-level sensing tasks and send the data to the application \cite{FKKLS07}.

Since the sensor nodes are battery operated, so minimizing energy consumption is a critical
issue for designing topology of wireless sensor networks to increase its lifetime.
A message packet transmitted by a sensor device to another is often received by many
nodes in the vicinity of the receiver node. This causes collision of signals and increased
interference in its neighboring nodes, which leads to message packet loss. Due to packet
loss the sender node needs to retransmit the message packets. Therefore, large interference
of networks may result delays for delivering data packets. This would further enhance the
energy consumption of nodes in network. Thus interference reduction of nodes is a crucial
issue for minimizing (i) delays for delivering data packets and (ii) energy consumption of
a wireless sensor network.

\section{Network Model and Interference Related Problems}
\begin{definition}
$\delta(u, v)$ denotes the Euclidean distance between two sensor nodes $u$ and $v$. 
A range assignment is $\rho:{\cal V} \rightarrow \mathbb{R}$. A communication graph 
$G_\rho = ({\cal V}, E_\rho)$ is a directed graph, where ${\cal V}$ is the set of 
sensor nodes and $E_\rho = \{(u, v) | \rho(u) \geq \delta(u, v)\}$. $G_\rho$ is 
said to be strongly connected if there is a directed path between each pair of 
vertices $u, v \in {\cal V}$ in $G_\rho$. $G_\rho$ is said to be an arborescence 
rooted at $v \in {\cal V}$ if there is a directed path from $v$ to all other 
vertices $u \in {\cal V}$ in $G_\rho$.
\end{definition}

Different models have been proposed to minimize the interference in sensor networks
\cite{BRWZ04,RWZ04,RSWZ05}. In this paper, we focus on the following two widely accepted models:

\begin{itemize}
\item[$\bullet$] Sender Interference Model (SIM): The interference of a node $u \in {\cal V}$ 
is the cardinality of the set of nodes to whom it can send messages directly. The set of nodes 
creating interference with $u$ with respect to the range assignment $\rho$ is denoted by 
$I_S^u(\rho)$ and defined by $I_S^u(\rho) = \{v \in {\cal V} \setminus \{u\} | \delta(u, v) 
\leq \rho(u)\}$, where $\rho(u)$ is the range of the node $u$. Therefore the interference 
value of node $u$ is equal to $|I_S^u(\rho)|$ with respect to range assignment $\rho$.

\item[$\bullet$] Receiver Interference Model (RIM): The interference of a node $u \in {\cal V}$ 
is the cardinality of the set of nodes from which it can receive messages directly. The set of 
nodes creating interference with $u$ with respect to the range assignment $\rho$ is denoted by 
$I_R^u(\rho)$ and defined by $I_R^u(\rho) = \{v \in {\cal V} \setminus \{u\} | \delta(u, v) \leq 
\rho(v)\}$, where $\rho(v)$ is the range of the node $v$. Therefore the interference value of 
node $u$ is equal to $|I_R^u(\rho)|$ with respect to range assignment $\rho$.
\end{itemize}

In the interference minimization problems, the objective is to minimize the following 
objective functions:

\begin{itemize}
\item[$\circ$] {\bf MinMax SI (MMSI):} Given a set ${\cal V}$ of $n$ sensors, find a range 
assignment function $\rho$ such that the communication graph $G_\rho$ satisfy the connectivity 
predicate ${\cal P}$ and maximum sender interference of the sensors in the network is minimum i.e., 
\[\min_{\rho|G_\rho {satisfy} {\cal P}} \max_{v \in {\cal V}} I_S^v(\rho)\]

\item[$\circ$] {\bf Min Total SI (MTSI):} Given a set ${\cal V}$ of $n$ sensors, find a range 
assignment function $\rho$ such that the communication graph $G_\rho$ satisfy the connectivity 
predicate ${\cal P}$ and total sender interference of the entire sensor network is minimum i.e., 
\[\min_{\rho|G_\rho {satisfy} {\cal P}} \sum_{v \in {\cal V}} I_S^v(\rho)\]

\item[$\circ$] {\bf MinMax RI (MMRI):} Given a set ${\cal V}$ of $n$ sensors, find a range
assignment function $\rho$ such that the communication graph $G_\rho$ satisfy the connectivity
predicate ${\cal P}$ and maximum receiver interference of the sensors in the network is minimum i.e.,
\[\min_{\rho|G_\rho {satisfy} {\cal P}} \max_{v \in {\cal V}} I_R^v(\rho)\]

\item[$\circ$] {\bf Min Total RI (MTRI):} Given a set ${\cal V}$ of $n$ sensors, find a range
assignment function $\rho$ such that the communication graph $G_\rho$ satisfy the connectivity
predicate ${\cal P}$ and total receiver interference of the entire sensor network is minimum i.e.,
\[\min_{\rho|G_\rho {satisfy} {\cal P}} \sum_{v \in {\cal V}} I_R^v(\rho)\]
\end{itemize}

In the communication graph corresponding to a range assignment $\rho$, the number of 
out-directed edges and in-directed edges are same, which leads to the following result:

\begin{theorem} \label{theorem-1}
For any range assignment $\rho$, {\bf MTSI = MTRI}.
\end{theorem}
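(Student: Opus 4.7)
The plan is a double-counting argument on the edge set $E_\rho$ of the directed communication graph $G_\rho$. The two quantities $|I_S^u(\rho)|$ and $|I_R^u(\rho)|$ are, by the definitions given just above, exactly the out-degree and in-degree of $u$ in $G_\rho$: $v$ lies in $I_S^u(\rho)$ iff $\delta(u,v)\le\rho(u)$ iff $(u,v)\in E_\rho$, and $v$ lies in $I_R^u(\rho)$ iff $\delta(u,v)\le\rho(v)$ iff $(v,u)\in E_\rho$.

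First I would write out these two identifications explicitly, so that MTSI and MTRI are rephrased as $\sum_{u\in\mathcal V}\deg^+_{G_\rho}(u)$ and $\sum_{u\in\mathcal V}\deg^-_{G_\rho}(u)$ respectively. Then I would invoke the standard handshake identity for directed graphs, namely $\sum_{u}\deg^+(u)=|E_\rho|=\sum_{u}\deg^-(u)$, which holds because each directed edge $(u,v)\in E_\rho$ contributes exactly one to $\deg^+(u)$ and exactly one to $\deg^-(v)$. This is the sentence already foreshadowed in the excerpt (``the number of out-directed edges and in-directed edges are same''), so no additional machinery is needed.

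There is essentially no hard step; the only thing to be careful about is not to confuse $I_R^u$ with the set of nodes $u$ can hear from versus can be heard by, since the definition uses $\rho(v)$ (the transmitter's range) rather than $\rho(u)$. Once this is pinned down, the equality MTSI $=|E_\rho|=$ MTRI is immediate for any range assignment $\rho$, regardless of whether $G_\rho$ satisfies any connectivity predicate $\mathcal P$. I would close with a one-line remark that, as a consequence, any algorithm for MTSI transfers verbatim to MTRI and vice versa.
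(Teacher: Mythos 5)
Your proof is correct and is essentially the same argument the paper uses: the paper justifies the theorem with the single observation that the number of out-directed edges equals the number of in-directed edges in $G_\rho$, which is exactly your handshake identity $\sum_u \deg^+(u) = |E_\rho| = \sum_u \deg^-(u)$ after identifying $|I_S^u(\rho)|$ and $|I_R^u(\rho)|$ with out- and in-degree. Your write-up just makes those identifications explicit.
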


\subsection{Our Contribution}
In this paper, we propose algorithm for {\it minimizing maximum interference} and 
{\it minimizing total interference} of a given wireless sensor network. For minimizing maximum
interference we present optimum solution with running time $O(({\cal P}_n + n^2) \log n)$ for
connectivity predicate ${\cal P}$ like strong connectivity, broadcast, $k$-edge(vertex)
connectivity, spanner. Here $O({\cal P}_n)$ is the time complexity for checking the 
connectivity predicate ${\cal P}$. The running time of the previous best known solution was
$O({\cal P}_n \times n^2)$ \cite{BP08}.

For the minimizing total interference we propose optimum algorithm for the connectivity predicate 
broadcast. The running time of the propose algorithm is $O(n)$. For the same problem, the previous
best known result was $2(1 + \ln (n-1))$-factor approximation algorithm \cite{BP08}. We also
propose a heuristic for minimizing total interference in the case of strongly connected predicate
and compare our result with the best result available in the literature. Experimental results
demonstrate that our heuristic outperform existing result.

We organize remaining part of this paper as follows: In Section \ref{RelatedWork}, we discuss
existing results in the literature. The algorithm for the optimum solution of minimizing 
maximum interference for different connectivity predicate appears in Section \ref{MinMax}.
In Section \ref{MinTotal}, we present optimum algorithm to minimize total interference for
the connectivity predicate broadcast. A heuristic for minimizing total interference for the 
strongly connected predicate appears in Section \ref{Heuristic}. Finally, we conclude the
paper in Section \ref{Conclusion}.

\section{Related Works} \label{RelatedWork}
Tan et al. \cite{TLLWC11} studied minimization of the average interference and the maximum
interference for the highway model, where all the nodes are arbitrarily distributed on a
line. For the minimum average interference problem they proposed an exact algorithm, which
runs in $O(n^3 \Delta^3)$ time, where $n$ is the number of nodes and $\Delta$ is the maximum
node degree in the communication graph for the equal range assigned to each nodes equal to
the maximum consecutive distance between two nodes. For the minimization of maximum interference
problem, they proposed $O(n^3 \Delta^{O(k)})$ time algorithm, where $k = O(\sqrt{\Delta})$.
Lou et al. \cite{LTWL12} improves the time complexity to $O(n \Delta^2)$ for the minimization
of average interference problem. Rickenbach et al. \cite{RSWZ05} proved that minimum value of  
maximum interference is bounded by $O(\sqrt{\Delta})$ and presented an $O(\sqrt[4]{\Delta})$-factor
approximation algorithm, where $\Delta$ is the maximum node degree in the communication graph
for some equal range $\rho_{\max}$ assigned to all the nodes.

For 2D networks, Buchin \cite{Buchin08} considered receiver interference model and
proved that minimizing the maximum interference is NP-hard whereas Bil$\grave{o}$ and
Proietti \cite{BP08} considered sender interference model and proposed a
polynomial time algorithm for minimizing the maximum interference. Their algorithm works
for many connectivity predicate like simple connectivity, strong connectivity, broadcast,
$k$-edge(vertex) connectivity, spanner, and so on. They also proved that any polynomial
time $\alpha$-approximation algorithm for minimum total range assignment problem with
connectivity predicate ${\cal P}$ can be used for designing a polynomial time
$\alpha$-approximation algorithm for minimum total interference problem for ${\cal P}$.
Panda and Shetty \cite{PS12} considered 2D networks with sender centric model and proposed an
optimal solution for minimizing the maximum interference and a 2-factor approximation algorithm
for average interference. Moscibroda and Wattenhofer \cite{MW05} proposed $O(\log n)$-factor
greedy algorithm for minimizing average interference.

\section{Minimization of Maximum Interference} \label{MinMax}
In this section we consider sender centric interference model. Given a set ${\cal V} = \{v_1,
v_2, \ldots, v_n\}$ of $n$ nodes distributed on a 2D plane, the objective is to find a range
assignment $\rho:{\cal V} \rightarrow \mathbb{R}$ such that the corresponding communication
graph $G_\rho$ contains connectivity predicate ${\cal P}$ like strong connectivity, broadcast,
$k$-edge(vertex) connectivity, spanner etc. Bil$\grave{o}$ and Proietti considered the same
problem and proposed $O({\cal P}_n \times n^2)$ time algorithm for optimum solution, where
${\cal P}_n$ is the time required to check predicate ${\cal P}$ for a given communication
graph \cite{BP08}.

Here we propose an algorithm to solve the above problem optimally. The running time of our
algorithm is $O(({\cal P}_n + n^2) \log n)$, which leads to a big improvement over \cite{BP08}.

\subsection{Algorithm}
In the network, the number of nodes is $n$, which means maximum possible interference of a node
is $n-1$. The main idea of our algorithm is very simple: first we start range assignment to each
of the node in such a way that the interference of each node is $k$ ($1 \leq k \leq n-1$). Next
we test whether the communication graph contains the connectivity predicate ${\cal P}$ or not.
If the answer is {\it yes}, then we try for lower values of $k$. Otherwise we try for higher values
of $k$. The pseudo code for the minimizing maximum sender interference (MMSI) algorithm
described in Algorithm \ref{alg-1}.

In the algorithm we use a matrix $M$ of size $n \times n-1$ and $M(i, j) =
\delta(v_i, u)$, where $u \in {\cal V}$ such that if $\rho(v_i) = \delta(v_i, u)$,
then $I_S^{v_i}(\rho) = j$ for all $i = 1, 2, \ldots n$ and $j = 1, 2, \ldots, n-1$. 
Elements of the $i$-th row are $\delta(v_i, v_1), \delta(v_i, v_2), \ldots, 
\delta(v_i, v_{i-1})$, $\delta(v_i, v_{i+1}), \ldots, \delta(v_i, v_n)$ in ascending 
order from left to right. 

\begin{algorithm}
\caption{Optimum algorithm for minimizing maximum sender interference}
\begin{algorithmic}[1]
\STATE {\bf Input:} a set ${\cal V} = \{v_1, v_2, \ldots, v_n\}$ of $n$ nodes and
                    a connectivity predicate ${\cal P}$.
\STATE {\bf Output:} a range assignment $\rho$ such that communication graph $G_\rho$ contains
                    connectivity predicate ${\cal P}$ and an interference value (MMSI).

\STATE Construct the matrix $M$ as described above.
\STATE $\ell \leftarrow 0$, $r \leftarrow n-1$
\WHILE{($\ell \neq r-1$)}
    \FOR{($i = 1, 2, \ldots, n$)}
        \STATE $\rho(v_i) = M(i, \frac{\ell+r}{2})$
    \ENDFOR
    \STATE Construct the communication graph $G_\rho$ corresponding to $\rho$.
    \STATE Test whether $G_\rho$ contains connectivity predicate ${\cal P}$ or not.
    \IF{(answer of the above test is {\it yes})}
        \STATE $r = \lfloor\frac{\ell+r}{2}\rfloor$ /* maximum interference is at most $\lfloor\frac{\ell+r}{2}\rfloor$ */
    \ELSE
        \STATE $\ell = \lfloor\frac{\ell+r}{2}\rfloor$ /* minimum interference is greater than  $\lfloor\frac{\ell+r}{2}\rfloor$ */
    \ENDIF
\ENDWHILE
\FOR{($i = 1, 2, \ldots, n$)}
    \STATE $\rho(v_i) = M(i, r)$
\ENDFOR
\STATE Return$(\rho, r)$
\end{algorithmic}
\label{alg-1}
\end{algorithm}

\begin{theorem} \label{theorem-2}
Algorithm \ref{alg-1} computes minimum of maximum sender interference (MMSI) optimally and its
worst case running time is $O(({\cal P}_n + n^2) \log n)$.
\end{theorem}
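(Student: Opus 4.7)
\medskip
\noindent\textbf{Proof proposal.} The plan is to establish correctness via a monotonicity argument that justifies the binary search, and then add up the running time of the three ingredients (building $M$, the $O(\log n)$ iterations, and the per-iteration work).

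First I would prove a \emph{monotonicity lemma}: if $\rho$ and $\rho'$ are two range assignments with $\rho(v)\le \rho'(v)$ for every $v\in\mathcal{V}$, then $E_\rho\subseteq E_{\rho'}$, and hence $G_{\rho'}$ satisfies $\mathcal{P}$ whenever $G_\rho$ does. This is immediate for the connectivity predicates listed in the theorem (strong connectivity, broadcast from $s$, $k$-edge/vertex connectivity, $t$-spanner), since each is preserved under the addition of edges: reachability and $k$-connectivity obviously are, and for the spanner predicate adding edges can only decrease pairwise hop/weight distances in the communication graph.

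Next I would interpret the assignment used inside the loop. By construction of $M$, the value $M(i,k)$ is exactly the distance from $v_i$ to its $k$-th nearest neighbour in $\mathcal{V}\setminus\{v_i\}$, so setting $\rho_k(v_i)=M(i,k)$ produces the \emph{largest} range assignment whose sender interference at every node is at most $k$: any other $\tilde{\rho}$ with $|I_S^{v_i}(\tilde{\rho})|\le k$ for all $i$ must satisfy $\tilde{\rho}(v_i)\le M(i,k)=\rho_k(v_i)$. Combined with the monotonicity lemma, this gives the key equivalence
\[
\bigl(\exists\,\tilde{\rho}\text{ with MMSI}\le k\text{ and }G_{\tilde{\rho}}\models\mathcal{P}\bigr)\iff G_{\rho_k}\models\mathcal{P}.
\]
Therefore the set $\{k : G_{\rho_k}\models\mathcal{P}\}$ is upward closed in $\{1,\dots,n-1\}$, and the optimal MMSI is its minimum element. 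The main loop is a textbook binary search on this monotone predicate: the invariant ``no feasible $k\le\ell$ and $G_{\rho_r}\models\mathcal{P}$'' is maintained at every iteration, and when the loop terminates with $r=\ell+1$ the value $r$ equals the optimum and $\rho_r$ is returned as a witness.

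For the running time, building $M$ requires, for each row $i$, computing the $n-1$ distances $\delta(v_i,\cdot)$ and sorting them, costing $O(n\log n)$ per row and $O(n^2\log n)$ in total. The \textbf{while} loop performs $O(\log n)$ iterations; in each iteration we set $n$ ranges in $O(n)$ time, construct $G_\rho$ in $O(n^2)$ time by comparing $\rho(u)$ with $\delta(u,v)$ for every ordered pair, and then invoke the predicate-checking routine at cost $O(\mathcal{P}_n)$. Summing yields $O(n^2\log n)+O(\log n)\cdot O(n^2+\mathcal{P}_n)=O((\mathcal{P}_n+n^2)\log n)$, as claimed. The only subtlety I anticipate is making the monotonicity step fully rigorous for the spanner predicate; once that is pinned down, everything else is bookkeeping.
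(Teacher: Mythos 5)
Your proposal is correct and follows essentially the same route as the paper's proof: binary search over the interference bound $k$, justified by the monotonicity of the communication graph (and hence of the predicate $\mathcal{P}$) in the range assignment, together with the same $O(n^2\log n)$ cost for building $M$ and $O(\log n)$ iterations each costing $O(n^2+\mathcal{P}_n)$. Your write-up is in fact more complete than the paper's, which leaves implicit the key equivalence you state explicitly — that $\rho_k$ is the maximal assignment with per-node interference at most $k$, so testing $G_{\rho_k}$ decides feasibility of the bound $k$.
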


\begin{proof}
Let $\rho$ (resp. $\rho'$) be the range assignment to the nodes in ${\cal V}$ when the sender interference of
each node is $\ell$ (resp. $r$). The correctness of the Algorithm \ref{alg-1} follows from the
fact (i) the Algorithm \ref{alg-1} stops when $\ell = r - 1$ such that $G_\rho$ does not contain
connectivity predicate ${\cal P}$, whereas $G_{\rho'}$ contains connectivity predicate ${\cal P}$
and (ii) if $\rho(u) > \rho'(u)$, then $I_S^u(\rho) \geq I_S^u(\rho')$.

The construction time of matrix $M$ (line number 3 of the Algorithm \ref{alg-1}) takes
$O(n^2 \log n)$ time in worst case. Construction of the communication graph $G_\rho$
(line number 9 of the Algorithm \ref{alg-1}) and testing connectivity predicate
(line number 10 of the Algorithm \ref{alg-1}) take $O({\cal P}_n)$ time. Again, each execution
of {\bf while} loop in line number 5 reduce the value $(\ell - r)$ by half of its previous value.
Therefore, Algorithm \ref{alg-1} call this {\bf while} loop $O(\log n)$ time. Thus, the time
complexity results of the theorem follows.
\end{proof}

\section{Minimization of Total Interference} \label{MinTotal}
Given a set ${\cal V} = \{v_1, v_2, \ldots, v_n\}$ of $n$ nodes and a source node
$s \in {\cal V}$ distributed on a 2D plane, the objective is to find a range assignment
$\rho:{\cal V} \rightarrow \mathbb{R}$ such that the corresponding communication graph
$G_\rho$ contains an arborescence rooted at $s$ (connectivity predicate is broadcast)
and the total interference (sender/receiver) is minimum. Bil$\grave{o}$ and Proietti
considered the same problem and proposed $2(1 +  \ln (n-1))$-factor approximation algorithm
\cite{BP08}. Here we propose a very simple optimum algorithm. The running time of the 
algorithm is linear. The pseudo code for the minimum total sender interference (MTSI) algorithm
described in Algorithm \ref{alg-2}. Though the propose algorithm is trivial, we are proposing it 
for completeness of the literature. 

\begin{lemma} \label{lemma-1}
For any range assignment $\rho$ such that the communication graph $G_\rho = ({\cal V}, E_\rho)$
contains arborescence rooted at any node $u$ in the network of $n$ nodes, the minimum total
sender interference of the networks is at least $n-1$.
\end{lemma}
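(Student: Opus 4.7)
The plan is to connect the sender interference to a purely graph-theoretic quantity, namely the number of edges in $G_\rho$, and then exploit the fact that an arborescence on $n$ nodes must already contribute $n-1$ edges.

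First I would observe that, by the definition of the Sender Interference Model, the set $I_S^v(\rho)$ is exactly the out-neighborhood of $v$ in the directed communication graph $G_\rho = ({\cal V}, E_\rho)$, so $|I_S^v(\rho)|$ equals the out-degree of $v$. Summing over all $v \in {\cal V}$, the total sender interference therefore equals $\sum_{v \in {\cal V}} \mathrm{outdeg}(v) = |E_\rho|$, the total number of directed edges of $G_\rho$.

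Next I would use the hypothesis that $G_\rho$ contains an arborescence $T$ rooted at some $u$. By definition, $T$ is a spanning out-tree with a directed path from $u$ to every other vertex, so $T$ has exactly $n-1$ edges, all of which are edges of $G_\rho$. Hence $|E_\rho| \geq n-1$, and combining this with the identity from the previous step gives $\sum_{v \in {\cal V}} |I_S^v(\rho)| \geq n-1$, which is the desired lower bound on MTSI.

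There is no real obstacle here: the only subtlety is making sure the out-degree/interference identification is stated cleanly, since $I_S^v(\rho)$ is defined as a \emph{set} of nodes rather than as a graph-theoretic degree. Once that bookkeeping is done, the bound is immediate from the edge count of any spanning arborescence. If desired, by Theorem~\ref{theorem-1} the same bound $n-1$ transfers verbatim to MTRI.
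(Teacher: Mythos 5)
Your proof is correct and rests on the same edge-counting idea as the paper's: both arguments boil down to the fact that a spanning arborescence forces $G_\rho$ to have at least $n-1$ edges. The only cosmetic difference is that the paper counts the in-edge at each non-root vertex (a receiver-interference bound) and then invokes Theorem~\ref{theorem-1} to transfer it to the sender side, whereas you identify the total sender interference with $|E_\rho|$ directly and skip that detour.
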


\begin{proof}
Since the communication graph $G_\rho$ contains an arborescence rooted at $u$, each
node $v \in {\cal V}\setminus \{u\}$ has an incoming edge. Therefore, the total receiver
interference is at least $n-1$. Thus, the lemma follows from Theorem \ref{theorem-1}.
\end{proof}

\begin{algorithm}
\caption{Optimum algorithm for minimizing maximum sender interference}
\begin{algorithmic}[1]
\STATE {\bf Input:} a set ${\cal V} = \{v_1, v_2, \ldots, v_n\}$ of $n$ nodes and
                    a source node $s \in {\cal V}$.
\STATE {\bf Output:} a range assignment $\rho$ such that communication graph $G_\rho$ contains
                    an arborescence rooted at $s$ and total interference (MTSI).

\FOR {($i = 1, 2, \ldots, n$)}
    \STATE $\rho(v_i) = 0$
\ENDFOR
\STATE Find the farthest node $u \in {\cal V}$ from $s$.
\STATE $\rho(s) = \delta(s, u)$
\STATE Return$(\rho, n-1)$ /* $\sum_{v \in {\cal V}}I_S^v(\rho) = n-1$ */
\end{algorithmic}
\label{alg-2}
\end{algorithm}

\begin{theorem} \label{theorem-3}
Algorithm \ref{alg-2} produces optimum total interference in $O(n)$ time.
\end{theorem}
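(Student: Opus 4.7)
The plan is to show three things: (i) the graph $G_\rho$ produced by Algorithm \ref{alg-2} really does contain an arborescence rooted at $s$, (ii) the total sender interference of this $\rho$ equals $n-1$, which by Lemma \ref{lemma-1} is optimal, and (iii) the algorithm runs in $O(n)$ time.

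For (i), I would argue that since $u$ is the farthest node from $s$, we have $\delta(s, v) \le \delta(s, u) = \rho(s)$ for every $v \in \mathcal{V}\setminus\{s\}$. Hence $(s, v) \in E_\rho$ for all such $v$, so the star centered at $s$ with edges directed outward is a subgraph of $G_\rho$; this is already an arborescence rooted at $s$. For (ii), since $\rho(v_i) = 0$ for all $i \ne$ (index of $s$), each such node has $I_S^{v_i}(\rho) = 0$, while $I_S^s(\rho) = n-1$ by the same farthest-node argument. Summing gives $\sum_{v \in \mathcal{V}} I_S^v(\rho) = n-1$. Combined with Lemma \ref{lemma-1}, which gives a matching lower bound of $n-1$ for any $\rho$ whose communication graph contains an arborescence rooted at any node, optimality follows immediately.

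For (iii), initializing the ranges is $O(n)$, scanning $\mathcal{V}$ to find the farthest node from $s$ is a single linear pass computing $n-1$ Euclidean distances (each in $O(1)$), and assigning $\rho(s)$ is $O(1)$, so the total running time is $O(n)$.

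There is no real obstacle here: the only subtle point is remembering to appeal to Lemma \ref{lemma-1} to convert the construction of a feasible solution of cost $n-1$ into a proof of optimality, which is why the lemma was stated in advance. Everything else is a direct verification from the definitions of the sender interference model and of an arborescence.
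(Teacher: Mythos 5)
Your proof is correct and follows the same route as the paper: exhibit the feasible star solution of total interference $n-1$, invoke Lemma \ref{lemma-1} for the matching lower bound, and observe the linear running time. You simply spell out the details (feasibility of the arborescence and the exact interference count) that the paper's one-line proof leaves implicit.
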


\begin{proof}
The correctness of the algorithm follows from (i) the fact that total interference produce by
Algorithm \ref{alg-2} is $n-1$ and (ii) Lemma \ref{lemma-1}. Time complexity follows from the
{\bf for} loop (line number 3) and line number 6.
\end{proof}

\section{Heuristic for Strongly Connected Predicate} \label{Heuristic}
In this section, we propose a heuristic to minimize total interference for strongly
connected predicate i.e given a set ${\cal V} = \{v_1, v_2, \ldots, v_n\}$ of $n$ nodes
distributed on a 2D plane, we design a heuristic for range assignment $\rho$ such that
the communication graph $G_\rho$ is strongly connected. Experimental results presented
in the Subsection \ref{ExpRes} demonstrate that our heuristic perform very well compare
to existing result in the literature.

\begin{algorithm}
\caption{Heuristic to minimizing total sender interference for Strongly Connected Predicate}
\begin{algorithmic}[1]
\STATE {\bf Input:} a set ${\cal V} = \{v_1, v_2, \ldots, v_n\}$ of $n$ nodes.
\STATE {\bf Output:} a range assignment $\rho$ such that communication graph $G_\rho$ is
 strongly connected and total interference of the network.

\FOR{($i = 1, 2, \ldots, n$)}
    \STATE $\rho(v_i) \leftarrow 0$ /* initial range assignment */
    \STATE $|I_S^{v_i}(\rho(v_i))| \leftarrow 0$ /* initial interference assignment */
    \STATE $T_I \leftarrow 0$ /* initial total interference */
\ENDFOR
\STATE ${\cal U}_1 = \{v_1\}$ and ${\cal U}_2 = {\cal V} \setminus \{v_1\}$
\WHILE{(${\cal U}_2 \neq \emptyset$)}
\STATE Choose $u_1, u_1' \in {\cal U}_1$ and $u_2 \in {\cal U}_2$ such that
$|I_S^{u_1}(\delta(u_1, u_2))| - |I_S^{u_1}(\rho(u_1))| + |I_S^{u_2}(\delta(u_2, u_1'))| 
\leq |I_S^{w_1}(\delta(w_1, w_2))| - |I_S^{w_1}(\rho(w_1))| + |I_S^{w_2}(\delta(w_2, w_1'))|$
for all $w_1, w_1' \in {\cal U}_1$ and $w_2 \in {\cal U}_2$.
\STATE $T_I = T_I + |I_S^{u_1}(\delta(u_1, u_2))| - |I_S^{u_1}(\rho(u_1))| + |I_S^{u_2}(\delta(u_2, u_1'))|$
/* total interference update */
\STATE $\rho(u_1) = \delta(u_1, u_2)$ and $\rho(u_2) = \delta(u_2, u_1')$ /* new range assignments */
\STATE $|I_S^{u_1}(\rho(u_1))| = |I_S^{u_1}(\delta(u_1, u_2))|$ and
$|I_S^{u_2}(\rho(u_2))| = |I_S^{u_1}(\delta(u_2, u_1'))|$ \\
/* new interference assignments */
\STATE ${\cal U}_1 = {\cal U}_1 \cup \{u_2\}$ and ${\cal U}_2 = {\cal U}_2 \setminus \{u_2\}$
\ENDWHILE
\STATE Return($\rho, T_I$)
\end{algorithmic}
\label{alg-3}
\end{algorithm}

\begin{theorem} \label{theorem-4}
The running time of the Algorithm \ref{alg-3} is polynomial in input size.
\end{theorem}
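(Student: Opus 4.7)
The plan is to bound the number of iterations of the \textbf{while} loop, then bound the work done inside a single iteration, and multiply the two together to obtain a polynomial in $n$.

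First I would observe that each iteration of the \textbf{while} loop strictly decreases the cardinality of ${\cal U}_2$ by one, because the chosen vertex $u_2$ is removed from ${\cal U}_2$ and added to ${\cal U}_1$, and no other modifications of these sets occur. Since initially $|{\cal U}_2| = n-1$, the loop executes at most $n-1$ times.

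Next I would analyze a single iteration. The selection rule on line~10 is an explicit minimization over all triples $(w_1, w_1', w_2) \in {\cal U}_1 \times {\cal U}_1 \times {\cal U}_2$, so there are at most $|{\cal U}_1|^2 \cdot |{\cal U}_2| = O(n^3)$ candidate triples to evaluate. For each candidate we must compute the three quantities $|I_S^{w_1}(\delta(w_1,w_2))|$, $|I_S^{w_1}(\rho(w_1))|$ and $|I_S^{w_2}(\delta(w_2,w_1'))|$; each of these is simply the number of nodes of ${\cal V}$ within a given Euclidean distance of a given node, which can be computed in $O(n)$ time by scanning all $n$ nodes and comparing distances. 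Thus each candidate takes $O(n)$ time to evaluate and the whole selection step costs $O(n^4)$. The subsequent updates on lines~11--14 are constant-time bookkeeping.

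Combining the two bounds, the total running time is $O(n) \cdot O(n^4) = O(n^5)$, which is polynomial in the input size. I would note at the end that this bound can be tightened by a standard preprocessing step: sort, for each node $v \in {\cal V}$, the distances $\delta(v, u)$ for all $u \neq v$ in $O(n^2 \log n)$ time, after which each interference query $|I_S^v(r)|$ reduces to a binary search costing $O(\log n)$; this gives an improved $O(n^4 \log n)$ bound. The only genuine care needed in the argument is the (trivial) verification that the interference values used inside the minimization can actually be evaluated efficiently from the current range assignment, rather than requiring a fresh construction of $G_\rho$ at each step; this is the one place where an implementer might inadvertently introduce a super-polynomial cost, but since the quantities depend only on pairwise distances, no such blow-up occurs.
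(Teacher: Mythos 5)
Your proposal is correct and follows essentially the same route as the paper: bound the number of \textbf{while}-loop iterations by $n-1$ via the shrinking of ${\cal U}_2$, and bound the selection step on line~10 by enumerating the $O(n^3)$ candidate triples. The only difference is that you charge an extra $O(n)$ (or $O(\log n)$ after preprocessing) per triple to evaluate the interference terms, arriving at $O(n^5)$ where the paper asserts $O(n^3)$ per iteration; since the theorem only claims polynomiality, this discrepancy is immaterial, and your accounting is arguably the more careful one.
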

\begin{proof}
The input size of the Algorithm \ref{alg-3} is $n$. In each execution of {\bf While} loop (line number 9 of
the Algorithm \ref{alg-3}), the size of the set ${\cal U}_2$ is decreasing by 1. Choosing the vertices
$u_1, u_1', u_2$ (line number 10 of the Algorithm \ref{alg-3}) needs $O(n^3)$ time. Thus, the time
complexity result of the theorem follows.
\end{proof}

\subsection{Experimental Results} \label{ExpRes}
The model consists of $n$ sensor nodes randomly distributed in a $1000 \times 1000$ square grid. 
For different values of $n$, we execute our heuristic 100 times for different input instances. 
We have taken average value of the total interference of the nodes in the network. In Figure 
\ref{figure-1}, we have shown average interference of our heuristic and compare it with the 
algorithm improved SMIT \cite{PS12}. In Table \ref{table-1}, we have shown the comparison of 
the total interference between our heuristic and the algorithm proposed in \cite{PS12}. These 
experimental results demonstrate that our heuristic outperform existing algorithm.

\begin{figure}[h]
\centering
\centerline{\psfig {figure=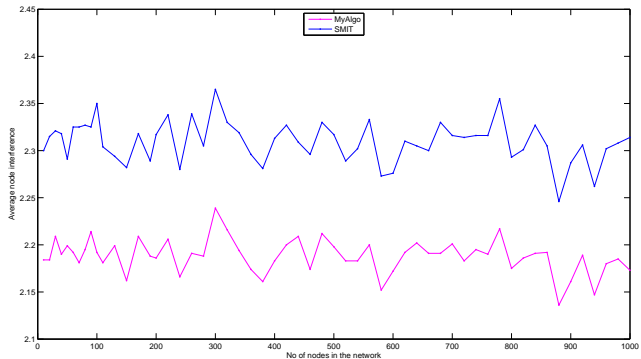,width=4in}}
\caption{Average node interference}
\label{figure-1}
\end{figure}

\vspace{-0.5in}
\begin{table}[h]
\begin{center} \label{table-1}
\begin{tabular}{|c|c|c|}
\hline
\# of & \multicolumn{2}{|c|}{Total Average Interference} \\
nodes $(n)$ & Our Heuristic & Improved SMIT \cite{PS12}\\
\hline
10 & 2.184 & 2.30 \\
\hline
20 & 2.184 & 2.315 \\
\hline
30 & 2.209 & 2.321 \\
\hline
50 & 2.199 & 2.291 \\
\hline
70 & 2.181 & 2.325 \\
\hline
100 & 2.192 & 2.350 \\
\hline
200 & 2.186 & 2.317 \\
\hline
300 & 2.239 & 2.365 \\
\hline
400 & 2.183 & 2.313 \\
\hline
500 & 2.198 & 2.317 \\
\hline
700 & 2.201 & 2.316 \\
\hline
900 & 2.161 & 2.287 \\
\hline
1000 & 2.173 & 2.314 \\
\hline
\end{tabular}
\caption{Simulation results for total interference}
\end{center}
\end{table}

\section{Conclusion} \label{Conclusion}
In this paper we considered 2D networks i.e., the wireless nodes are distributed on a
2D plane. All the results presented in this paper are applicable in 3D networks also.
Here we considered interference (sender interference model) minimization problem
in wireless networks. We proposed algorithm for {\it minimizing maximum interference} and
{\it minimizing total interference} of a given networks. For minimizing maximum interference we
presented optimum solution with running time $O(({\cal P}_n + n^2) \log n)$ for connectivity
predicate ${\cal P}$ like strong connectivity, broadcast, $k$-edge(vertex) connectivity, spanner,
where $n$ is the number of nodes in the network and $O({\cal P}_n)$ is the time complexity for
checking the connectivity predicate ${\cal P}$. The running time of the previous best known
solution was $O({\cal P}_n \times n^2)$ \cite{BP08}. Therefore, our solution is a significant
improvement over the best known solution with respect to time complexity.

For the minimizing total interference we proposed optimum algorithm for connectivity predicate 
broadcast. The running time of the proposed algorithm is $O(n)$. For the same problem, the previous
best known result was $2(1 + \ln (n-1))$-factor approximation algorithm \cite{BP08}. Therefore,
our solution is a significant improvement over the existing solution in the literature. We also
proposed a heuristic for minimizing total interference in the case of strongly connected predicate
and compare our result with the best result available in the literature. Experimental results
demonstrate that our heuristic outperform existing result.


\begin{thebibliography}{99}
\bibitem{ASSC02}
I. F. Akyildiz, W. Su, Y. Sankarasubramaniam and E. Cayirci, {\it Wireless sensor networks: A survey},
{\sf Computer Networks}, vol. 38, pp. 393 - 422, 2002.

\bibitem{Buchin08}
K. Buchin, {\it Minimizing the maximum interfrence is hard}, {\sf arXiv: 0802.2134v1}, 2008.

\bibitem{BP08}
D. Bil$\grave{o}$ and G. Proietti, {\it On the complexity of minimizing interference in ad-hoc and
sensor networks}, {\sf Theoretical Computer Science}, vol. 402, pp. 43 - 55, 2008.

\bibitem{BRWZ04}
M. Burkhat, P. von Rickenbach, R. Wattenhofer and A. Zollinger, {\it Does topology control reduce interference},
Proc. of the 5th {\sf International Symposium on Mobile Ad-Hoc Networking and Computing}, pp. 9 - 19, 2004.

\bibitem{FKKLS07}
S. Funke, A. Kesselman, F. Kuhn, Z. Lotker and M. Segal, {\it Improved approximation algorithms for
connected sensor cover}, {\sf Wireless Networks}, vol. 13, pp. 153 - 164, 2007.

\bibitem{LTWL12}
T. Lou, H. Tan, Y. Wang, F. C. M. Lau, {\it Minimizing average interference through topology control},
Proc. of the {\sf 7th International Symposium on Algorithms for Sensor Systems, Wireless Ad Hoc Networks and Autonomous Mobile Entities}, LNCS - 7111, pp. 115 - 129, 2012.

\bibitem{MPSCA02}
A. Mainwaring, J. Polastre, R. Szewczyk, D. Culler and J. Anderson, {\it Wireless sensor networks
for habitat monitoring}, Proc. of the 1st {\sf International Workshop on Wireless sensor networks
and Applications}, pp. 88 - 97, 2002.

\bibitem{MW05}
T. Moscibroda and R. Wattenhofer, {\it Minimizing interference in ad hoc and sensor networks},
Proc. of the {\sf 2005 joint workshop on Foundations of mobile computing}, pp. 24 - 33, 2005.

\bibitem{PS12}
B. S. Panda and D. P. Shetty, {\it Strong minimum interference topology for wireless sensor networks},
{\sf ADCONS 2011}, LNCS 7135, pp. 366 - 374, 2012.

\bibitem{RSWZ05}
P. von Rickenbach, S. Schimid, R. Wattenhofer and A. Zollinger, {\it A robust interference model for wireless adhoc networks}, Proc. of the 5th {\sf International workshop on Algorithms for Wireless, Mobile, Ad Hoc and Sensor Networks}, 2005.

\bibitem{RWZ04}
P. von Rickenbach, R. Wattenhofer and A. Zollinger, {\it Algorithmic models of interference in wireless ad hoc and sensor networks}, {\sf IEEE/ACM Transactions on Networking}, vol. 17, pp. 172 - 185, 2009.

\bibitem{TLLWC11}
H. Tan, T. Lou, F. C. M. Lau, Y. Wang and S. Chen, {\it Minimizing interference for the highway
model in wireless ad-hoc and sensor networks}, Proc. of the {\sf 37th Conference on Current Trends
in Theory and Practice of Computer Science}, LNCS - 6543, pp. 520 - 532, 2011.

\end{thebibliography}
\end{document}